\documentclass[conference]{IEEEtran}
\IEEEoverridecommandlockouts
\usepackage{cite}
\usepackage{amsmath,amssymb,amsfonts}
\usepackage{graphicx}
\usepackage{textcomp}
\usepackage{xcolor}
\usepackage{amsthm}
\usepackage[ruled,vlined,linesnumbered]{algorithm2e}
\usepackage{float}
\usepackage[caption=false, font=footnotesize]{subfig}
\usepackage{multirow}
\usepackage[hidelinks]{hyperref}

\def\BibTeX{{\rm B\kern-.05em{\sc i\kern-.025em b}\kern-.08em
    T\kern-.1667em\lower.7ex\hbox{E}\kern-.125emX}}
\begin{document}

% \title{Interval-Filtering ANNS for Containment and Overlap: A Unified Design Framework\\
% \title{Unified Dominance Graph for Two-Bound Interval-Filtered Approximate Nearest Neighbor Search
\title{Unified Dominance Graph for Interval-Predicate Approximate Nearest Neighbor Search
% {\footnotesize \textsuperscript{*}Note: Sub-titles are not captured in Xplore andshould not be used}
% \thanks{Identify applicable funding agency here. If none, delete this.}
}

\author{
\IEEEauthorblockN{
Kwun Hang Lau\textsuperscript{1},
Ruiyuan Zhang\textsuperscript{2},
Elton Chun-Chai Li\textsuperscript{1},\\
Wun Yu Chan\textsuperscript{1},
Xiaojun Cheng\textsuperscript{3},
Xiaofang Zhou\textsuperscript{1}}
\IEEEauthorblockA{
\textsuperscript{1}\,The Hong Kong University of Science and Technology;
\textsuperscript{2}\,Hong Kong Generative AI R\&D Center;\\
\textsuperscript{3}\,China Unicom (Hong Kong) Operation Ltd\\
\{khlaube, cceli, wychanbu\}@connect.ust.hk;
zry@hkgai.org;
chengxj31@chinaunicom.cn;
zxf@cse.ust.hk}
}
% \author{\IEEEauthorblockN{Kwun Hang Lau}
% \IEEEauthorblockA{\textit{Department of Computer Science and Engineering} \\
% \textit{The Hong Kong University of Science and Technology}\\
% Hong Kong \\
% khlaube@connect.ust.hk}
% \and
% \IEEEauthorblockN{2\textsuperscript{nd} Given Name Surname}
% \IEEEauthorblockA{\textit{dept. name of organization (of Aff.)} \\
% \textit{name of organization (of Aff.)}\\
% City, Country \\
% email address or ORCID}
% \and
% \IEEEauthorblockN{3\textsuperscript{rd} Given Name Surname}
% \IEEEauthorblockA{\textit{dept. name of organization (of Aff.)} \\
% \textit{name of organization (of Aff.)}\\
% City, Country \\
% email address or ORCID}
% \and
% \IEEEauthorblockN{4\textsuperscript{th} Given Name Surname}
% \IEEEauthorblockA{\textit{dept. name of organization (of Aff.)} \\
% \textit{name of organization (of Aff.)}\\
% City, Country \\
% email address or ORCID}
% \and
% \IEEEauthorblockN{5\textsuperscript{th} Given Name Surname}
% \IEEEauthorblockA{\textit{dept. name of organization (of Aff.)} \\
% \textit{name of organization (of Aff.)}\\
% City, Country \\
% email address or ORCID}
% \and
% \IEEEauthorblockN{6\textsuperscript{th} Given Name Surname}
% \IEEEauthorblockA{\textit{dept. name of organization (of Aff.)} \\
% \textit{name of organization (of Aff.)}\\
% City, Country \\
% email address or ORCID}
% }

\maketitle
\newtheorem{theorem}{Theorem}
\newtheorem{lemma}{Lemma}
\newtheorem{example}{Example}

\begin{abstract}
Approximate Nearest Neighbor Search (ANNS) is a core primitive for unstructured data retrieval. Real-world applications—such as temporal databases, financial data analysis, and retrieval-augmented generation—often require hybrid queries whose valid objects are constrained by continuous interval attributes, such as lifespans or price ranges. We study Interval-Predicate ANNS (IPANNS), where validity is determined by a predicate between an object interval and a query interval. Existing range-filtering ANNS (RFANNS) methods are designed for single-dimensional scalar filters, but interval predicates such as containment and overlap rely on two coupled endpoint constraints. Treating endpoints as independent scalar attributes can incur large intersection overhead, while containment-specific methods lack a generalized indexing abstraction.

In this paper, we propose the Unified Dominance Graph (UDG), a graph-indexing framework for the closed two-bound conjunctive fragment of IPANNS. For a chosen interval predicate, UDG maps object and query endpoints into a normalized two-dimensional dominance space and builds a dominance-labeled graph over the transformed coordinates. Containment, overlap, and other supported endpoint-bound predicates therefore reuse the same construction and search algorithms after semantic mapping, while each UDG instance remains tied to its selected predicate. UDG compresses query-state-specific proximity graphs into one compact index.
To improve graph search under restrictive interval filters, we add validity-preserving patch edges that provide routing choices when few objects remain valid.
Extensive evaluations on standard benchmarks and real-world datasets show that UDG achieves stable query performance across multiple interval relations and workloads, significantly outperforming existing hybrid search baselines while maintaining low indexing overhead.
\end{abstract}

\begin{IEEEkeywords}
high-dimensional data, approximate nearest neighbor search
\end{IEEEkeywords}

\section{Introduction}
The rapid adoption of approximate nearest neighbor search (ANNS) has made vector similarity search a core primitive for unstructured data retrieval. However, real-world database queries rarely rely on vector similarity alone. Users often issue hybrid queries that combine embedding similarity with structured metadata constraints. This requirement has motivated extensive research on range-filtering ANNS (RFANNS), which retrieves the top-$k$ nearest vectors whose associated scalar attributes fall within a specified numeric range~\cite{SeRF-2024,dynamic_serf,window-filter-2024-RFANNS,DIGRA-2025-rfanns,RangePQ-2025-rfanns,iRangeGraph-2024-rfanns}. Although state-of-the-art RFANNS indexes achieve strong performance for one-dimensional scalar attributes, many real-world objects cannot be adequately represented by a single point-valued attribute.

Interval attributes arise naturally in temporal databases, multimedia retrieval, financial data analysis, and temporal retrieval-augmented generation. For example, an event or video segment is described by a duration $[s_i,t_i]$; a stock may fluctuate within a daily price range; and a document, fact, or subscription may be valid only during a specific time interval. Recent work Hi-PNG~\cite{hi-png-2025} has recognized the importance of interval-aware hybrid search, but it primarily focuses on containment, where a data interval must be fully contained in the query interval.

Containment, however, is only one type of interval relation. In many applications, users need overlap queries, where an object is valid if its interval intersects the query interval. For example, a time-sensitive retrieval system may need to find events, documents, or active subscriptions that overlap a specified month. More generally, a broad class of interval predicates can be expressed as two closed endpoint inequalities. Containment requires $s_i \geq s_q$ and $t_i \leq t_q$, whereas overlap requires $t_i \geq s_q$ and $s_i \leq t_q$. These predicates differ in which endpoints are compared and in the directions of their inequalities, leading to different dominance geometries over the interval endpoint space.

Despite this practical need, efficiently supporting multiple interval predicates is still challenging. Existing RFANNS methods exploit a single total order over scalar attributes. Interval predicates, in contrast, induce a two-dimensional endpoint space: sorting by the start endpoint ignores the end constraint, while sorting by the end endpoint ignores the start constraint. An alternative is to treat an interval $[s_i,t_i]$ as two independent scalar attributes and reduce interval filtering to two RFANNS problems. However, this approach separates two coupled endpoint constraints and can introduce substantial candidate-generation and intersection overhead. Similarly, containment-oriented interval indexes do not directly provide a general indexing abstraction for overlap and other closed two-bound relations.

To address this gap, we study Interval-Predicate ANNS (IPANNS), where each query combines vector similarity with a predicate between an object interval and a query interval. We propose Unified Dominance Graph (UDG), a graph-indexing framework for the closed two-bound conjunctive fragment of IPANNS. The key idea is to compile a chosen interval predicate into a normalized two-dimensional dominance predicate over transformed endpoint coordinates. Once this semantic mapping is fixed, UDG uses the same edge-labeling, construction, and search procedures regardless of whether the original predicate is containment, overlap, or another supported endpoint-bound predicate. This unifies the indexing principle across interval predicates while preserving interval-valid graph traversal.

In summary, we make the following contributions.
\begin{itemize}
    \item We formulate Interval-Predicate ANNS for closed two-bound conjunctive interval predicates, covering containment, overlap, and related endpoint-bound predicates under a unified problem setting.

    \item We propose Unified Dominance Graph (UDG), a unified graph index that maps different interval predicates into a common transformed coordinate space and supports them with the same framework.

    \item We prove that UDG provides a structural lossless compression of many query-specific proximity graphs into one labeled graph, and analyze its time and space complexity. We further design validity-preserving patch edges to improve navigability under restrictive interval filters.

    \item Extensive experiments on benchmark and real-world datasets show that UDG achieves superior performance for containment, overlap, and additional interval relations, while maintaining low index cost.
\end{itemize}

The source code is available at \href{https://github.com/kwunhang/UDG.git}{kwunhang/UDG.git}.
\section{Preliminary}
\subsection{Problem Definition}
\label{sec:problem-definition}

We first define Interval-Predicate Approximate Nearest Neighbor Search (IPANNS).
Let $\mathbb{R}^{d}$ be a $d$-dimensional vector space equipped with a distance metric $\delta$.
Given two vectors $u,v \in \mathbb{R}^{d}$, $\delta(u,v)$ measures their dissimilarity.
When the distance function is clear from the context, we omit it from the notation.

\textsc{Definition 1} (Nearest Neighbor Search)
\textit{Given a query vector $q \in \mathbb{R}^{d}$, an integer $k>0$, and a set $\mathcal{D}$ of vectors in $\mathbb{R}^{d}$, the $k$ nearest neighbors of $q$ in $\mathcal{D}$, denoted by $\mathrm{kNN}(q,\mathcal{D})$, is a subset $R \subseteq \mathcal{D}$ of $k$ vectors such that for any $u \in R$ and any $v \in \mathcal{D} \setminus R$, $\delta(u,q) \leq \delta(v,q)$.}

In this paper, each data vector is associated with a closed interval attribute, and queries impose interval predicate constraints during nearest neighbor search.
An interval-attributed vector is represented as $(v_i,[s_i,t_i])$, where $v_i \in \mathbb{R}^{d}$ is the embedding vector and $[s_i,t_i]$ is its interval attribute with $s_i \leq t_i$.
A query is represented as $Q=(q,[s_q,t_q],k,\rho)$, where $q \in \mathbb{R}^{d}$ is the query vector, $[s_q,t_q]$ is the query interval, $k$ is the result size, and $\rho$ specifies the interval predicate.

IPANNS allows the validity of an object to depend on an interval predicate between the object interval and the query interval.
This paper focuses on the predicate class supported by UDG: closed two-bound conjunctive relations.
Each relation is the conjunction of two endpoint comparisons.
Each comparison relates one data endpoint, either $s_i$ or $t_i$, to one query endpoint, either $s_q$ or $t_q$, using $\geq$ or $\leq$.
This scope excludes disjunctions, equality-only predicates, strict-only predicates, interval length or gap constraints, and relations requiring three or more endpoint inequalities.

The two primary workloads evaluated in this paper are containment and overlap.
For containment, a data interval must be fully contained in the query interval: $\rho_{\mathsf{con}}([s_i,t_i],[s_q,t_q]) \equiv (s_i \geq s_q) \wedge (t_i \leq t_q)$.
For overlap, a data interval must intersect the query interval: $\rho_{\mathsf{ovl}}([s_i,t_i],[s_q,t_q]) \equiv (t_i \geq s_q) \wedge (s_i \leq t_q)$.

\begin{example}[Interval predicate queries]
Consider four data intervals $A=[1,5]$, $B=[3,7]$, $C=[6,9]$, and $D=[8,12]$.
For a containment query $[2,10]$, the valid objects are $B$ and $C$.
For an overlap query $[4,7]$, the valid objects are $A$, $B$, and $C$.
\end{example}

Given a predicate $\rho$, we denote by $\mathcal{D}_{\rho}[s_q,t_q]\triangleq \{v_i \mid (v_i,[s_i,t_i])\in\mathcal{D},\rho([s_i,t_i],[s_q,t_q])\}$ the subset of vectors whose intervals satisfy the query predicate.

\textsc{Definition 2} (Interval-Predicate Nearest Neighbor Search)
\textit{Given a static dataset $\mathcal{D}$ of interval-attributed vectors, a query $Q=(q,[s_q,t_q],k,\rho)$ returns $\mathrm{kNN}(q,\mathcal{D}_{\rho}[s_q,t_q])$, i.e., the $k$ vectors in $\mathcal{D}_{\rho}[s_q,t_q]$ with the smallest distances to $q$.}

For simplicity of presentation and evaluation, we assume that each generated query has at least $k$ valid vectors in $\mathcal{D}_{\rho}[s_q,t_q]$.
The selectivity of an interval-predicate query is defined as $\sigma(Q) \triangleq |\mathcal{D}_{\rho}[s_q,t_q]|/|\mathcal{D}|$.
Smaller $\sigma(Q)$ means that fewer objects satisfy the interval predicate, so the query is more restrictive.
Restrictive interval predicates induce small valid subgraphs, where graph-based ANNS methods may suffer from poor navigability.

Due to the curse of dimensionality~\cite{curse_dim}, following a large body of existing research on approximate nearest neighbor search, this paper studies the approximate version of this task.

\textsc{Definition 3} (Interval-Predicate ANNS, IPANNS)
\textit{Given a static set $\mathcal{D}$ of interval-attributed vectors and a query $Q=(q,[s_q,t_q],k,\rho)$, where $\rho$ is a supported interval predicate, IPANNS aims to return a set $\mathrm{kANN}(q,\mathcal{D}_{\rho}[s_q,t_q]) \subseteq \mathcal{D}_{\rho}[s_q,t_q]$ of $k$ vectors, with recall $
    \frac{
    |\mathrm{kANN}(q,\mathcal{D}_{\rho}[s_q,t_q]) \cap \mathrm{kNN}(q,\mathcal{D}_{\rho}[s_q,t_q])|
    }{k}$.}

For ease of reference, the main notations and their meanings are summarized in Table~\ref{tab:symbols}, which will be referred to throughout the rest of the paper.

\begin{table}[t]
\centering
\footnotesize
\caption{Summary of symbols.}
\setlength{\tabcolsep}{2.5pt}
\begin{tabular}{@{}p{0.27\columnwidth}p{0.68\columnwidth}@{}}
\hline
Symbol & Meaning \\
\hline
$\mathcal{D}, n$ & Dataset and number of objects \\
$v_i$ & Embedding vector of object $i$ \\
$[s_i,t_i]$ & Closed interval of object $i$ \\
$Q$ & Query $(q,[s_q,t_q],k,\rho)$ \\
$q, k$ & Query vector and result size \\
$[s_q,t_q]$ & Query interval \\
$\delta(\cdot,\cdot)$ & Distance in embedding space \\
$\rho$ & Supported interval predicate \\
$\mathcal{D}_{\rho}[s_q,t_q]$ & Vectors satisfying $\rho$ \\
$\sigma(Q)$ & Query selectivity \\
$(X_i,Y_i)$ & Transformed data coordinates \\
$(x_q,y_q)$ & Transformed query coordinates \\
$U_X,U_Y$ & Distinct transformed coordinates \\
$(a,c), V(a,c)$ & Canonical dominance state and its valid object set \\
\hline
\end{tabular}
\label{tab:symbols}
\end{table}

The key difficulty is that interval-predicate filtering is not a scalar range filter.
Containment, overlap, and other closed two-bound conjunctive relations are determined by two coupled endpoint inequalities, so the valid set is not a contiguous slice over one ordered attribute.
The next subsection introduces edge-labeled compression as a general way to represent many query-specific proximity graphs, before we show how to instantiate it for interval predicates.

\subsection{Edge-Labeled Graph Compression for Filtered ANNS}
\label{sec:edge-labeled-compression}
A direct way to support query-efficient filtered ANNS is to build one ANNS index for every possible query filter condition. While this avoids the cost of evaluating invalid nodes during the search, the storage overhead is prohibitive. For a dataset of size $n$, even a simple filtering attribute can induce $O(n^2)$ distinct valid subsets, making individual index materialization costly.

\begin{algorithm}[t]
\caption{\textsc{Prune}$(o,ann,M)$}
\label{alg:prune}
\KwIn{$o$: a vector; $ann$: a set of $o$'s approximate nearest neighbors; $M$: the max number of neighbors to keep.}
\KwOut{$neighbors\subseteq ann$: $o$'s neighbors after pruning}

$neighbors \leftarrow \emptyset$\;
Sort $ann$ in ascending order of $\delta(o,u)$, with ties broken by object id\;
\ForEach{$u\in ann$ in sorted order}{
    $dominated \leftarrow \mathrm{false}$\;
    \ForEach{$w\in neighbors$}{
        \If{$\delta(o,w)<\delta(o,u)$ and $\delta(w,u)<\delta(o,u)$}{
            $dominated \leftarrow \mathrm{true}$\ and \textbf{break}\;
        }
    }
    \lIf{$dominated=\mathrm{false}$}{
        add $u$ to $neighbors$
    }
    \lIf{$|neighbors| \geq M$}{
        \textbf{break}
    }
}
\Return $neighbors$\;
\end{algorithm}

To resolve this, recent approaches utilize edge-labeled proximity graphs~\cite{SeRF-2024,dynamic_serf}. Instead of building multiple graphs, these methods compress them into a single unified structure.
For each node $u$, its neighbor list contains tuples of the form $(l,r,v,b,e)$, where $v$ is a neighboring vector. 
The tuple indicates that $v$ is available from $u$ for all range queries $[s_q,t_q]$, with $s_q \in (l,r]$ and $t_q \in [b,e)$. 
For example, a tuple $(2,9,v,11,15)$ is activated by query range $[5,13]$, but not by $[10,13]$ or $[5,16]$. 

The structural property desired of such a graph is lossless compression: for every filter condition, the active edge set should match the dedicated graph that would be built over the corresponding valid vectors.
This guarantee assumes exact construction-time neighbor search, while query-time ANNS remains approximate in practice. Prior work established such compression for scalar range filters~\cite{SeRF-2024}. Interval filtering requires two coupled inequalities, and extending this guarantee to 2D cross-ordered domains demands a new formal proof. In Section~\ref{sec:exact-construction-setting}, we define the exact construction setting used by UDG.

\section{Unified Dominance Abstraction}
Section~\ref{sec:problem-definition} defines IPANNS using semantic interval predicates.
This section introduces the abstraction that makes UDG relation-independent after a one-time semantic mapping step.
The key observation is simple: each supported interval relation consists of two endpoint inequalities, and the direction of each inequality can be normalized by selecting and, when necessary, negating the corresponding endpoint coordinate.
After this normalization, all supported relations are seen by the index as the same two-dimensional dominance predicate.

\subsection{From Scalar Range Filtering to Interval Dominance}
\label{sec:2d-dominance-challenge}
Existing RFANNS methods are designed around a single ordered scalar attribute.
Graph-based methods such as SeRF, Dynamic RFANNS, and iRangeGraph compress or assemble proximity graphs whose validity is defined over scalar query ranges~\cite{SeRF-2024,dynamic_serf,iRangeGraph-2024-rfanns}; dynamic tree-based methods such as DIGRA and PQ-based methods such as RangePQ still organize filtering around the same scalar range semantics~\cite{DIGRA-2025-rfanns,RangePQ-2025-rfanns}.
Their common premise is that the filter can be described by one total order: a query selects a contiguous scalar range, and index states or edge labels can be maintained along that order.
Interval filtering breaks this premise.
The interval predicates targeted by UDG are defined by two coupled endpoint inequalities, so choosing the start order ignores the end constraint, choosing the end order ignores the start constraint, and the valid set becomes a two-dimensional dominance region rather than a one-dimensional scalar segment.

This mismatch is why interval filtering cannot be obtained by a direct lift of scalar RFANNS.
Running two scalar filters and intersecting their outputs separates the endpoint constraints before graph search; the intersection may identify valid objects, but it does not provide the interval-valid proximity graph needed for navigable ANNS.
Replacing the scalar range structure with an interval tree or another exact reporting structure can enumerate valid intervals, but the resulting decomposition may span multiple nodes or branches and does not provide the single monotone lifetime over which existing graph labels and compression proofs are defined.
Therefore, UDG builds the graph around the native shape of interval predicates: after each supported relation is mapped to dominance coordinates, every query is a two-dimensional dominance state, and every graph edge is labeled by the set of dominance states in which that edge is valid.

\subsection{Mapping Relations to Dominance}
\label{sec:algebraic-mapping-rule}

The physical predicate used by UDG is
\begin{equation}
    X_i \geq x_q \quad \wedge \quad Y_i \leq y_q.
    \label{eq:target-dominance}
\end{equation}
Here $(X_i,Y_i)$ are transformed data coordinates, and $(x_q,y_q)$ are transformed query coordinates.
The transformation is applied before index construction, so all later graph operations are defined over transformed coordinates rather than over relation-specific interval operators.

Table~\ref{tab:expanded-predicate-mapping} lists the mappings used in this paper for representative relations.
Each row is a semantic mapping into the same dominance-indexing framework, not a separate graph algorithm.

\begin{table}[t]
\centering
\small
\footnotesize
\caption{Representative mappings from interval semantics to the normalized dominance predicate.}
\setlength{\tabcolsep}{2.4pt}
\begin{tabular}{lccccc}
\hline
Relation & Original relation & $X_i$ & $x_q$ & $Y_i$ & $y_q$ \\
\hline
Containment & $s_i \geq s_q \wedge t_i \leq t_q$ & $s_i$ & $s_q$ & $t_i$ & $t_q$ \\
Overlap & $t_i \geq s_q \wedge s_i \leq t_q$ & $t_i$ & $s_q$ & $s_i$ & $t_q$ \\
Query-within-data & $s_i \leq s_q \wedge t_i \geq t_q$ & $t_i$ & $t_q$ & $s_i$ & $s_q$ \\
Both boundaries after & $s_i \geq s_q \wedge t_i \geq t_q$ & $s_i$ & $s_q$ & $-t_i$ & $-t_q$ \\
Both boundaries before & $s_i \leq s_q \wedge t_i \leq t_q$ & $-s_i$ & $-s_q$ & $t_i$ & $t_q$ \\
\hline
\end{tabular}
\label{tab:expanded-predicate-mapping}
\end{table}

The mapping is not unique.
It only needs to assign one endpoint comparison to the $X$ axis and the other to the $Y$ axis so that the transformed predicate becomes Eq.~\eqref{eq:target-dominance}.
Signs are chosen only to make the two assigned comparisons match the directions in Eq.~\eqref{eq:target-dominance}.
Once a relation is mapped, UDG fixes the transformed coordinates and all later construction and search steps operate only in this dominance space.

The first two rows are the primary workloads evaluated in this paper.
Using the running example from the previous section, a containment query $[2,10]$ maps to $(x_q,y_q)=(2,10)$, so intervals $B=[3,7]$ and $C=[6,9]$ satisfy $X_i\geq x_q$ and $Y_i\leq y_q$.
For an overlap query $[4,7]$, the data coordinates become $(X_i,Y_i)=(t_i,s_i)$ and the transformed query is again a dominance point; intervals $A$, $B$, and $C$ satisfy Eq.~\eqref{eq:target-dominance}.
Axis assignment is part of the mapping.
For example, query-within-data places $t_i\geq t_q$ on the $X$ side and $s_i\leq s_q$ on the $Y$ side, so the index again sees only the normalized predicate.
The signed rows are also handled by the same rule.

\subsection{Canonical Query States}
\label{sec:canonical-query-space}

The transformed query point $(x_q,y_q)$ may lie between data endpoints.
For indexing, UDG only needs to distinguish query boundaries that change the valid set.
Let $U_X$ and $U_Y$ be the sorted distinct values among $\{X_i\}_{i=1}^{n}$ and $\{Y_i\}_{i=1}^{n}$, respectively.
Since only these transformed data coordinates can change the truth of $X_i\geq x_q$ or $Y_i\leq y_q$, UDG snaps the raw transformed query to canonical boundaries:
\begin{align}
    x_q^{+} &= \min\{x \in U_X \mid x \geq x_q\}, \nonumber\\
    y_q^{-} &= \max\{y \in U_Y \mid y \leq y_q\}.
    \label{eq:canonical-boundaries}
\end{align}
If either boundary is undefined, the valid set is empty.
Otherwise, canonicalization is exact rather than approximate.

\begin{lemma}[Canonical Query Equivalence]
\label{lem:canonical-query-equivalence}
For any supported relation mapped to $X_i \geq x_q \wedge Y_i \leq y_q$, the raw transformed query $(x_q,y_q)$ and the canonical query $(x_q^{+},y_q^{-})$ induce the same valid set.
\end{lemma}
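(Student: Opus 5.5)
The plan is to prove the lemma one object at a time. Fix any object $i$. We show that $X_i \geq x_q \iff X_i \geq x_q^{+}$ and that $Y_i \leq y_q \iff Y_i \leq y_q^{-}$. The conjunction in Eq.~\eqref{eq:target-dominance} then has the same truth value under both queries. Since $\mathcal{D}_{\rho}[s_q,t_q]$ is exactly the set of objects satisfying the mapped predicate, the two valid sets coincide.

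We assume both canonical boundaries in Eq.~\eqref{eq:canonical-boundaries} are defined. For the $X$ side, the forward direction goes as follows. If $X_i \geq x_q$, then $X_i$ belongs to the set $\{x \in U_X \mid x \geq x_q\}$, because $X_i \in U_X$ by definition of $U_X$. Hence $X_i$ is at least the minimum of that set, so $X_i \geq x_q^{+}$. For the reverse direction, $x_q^{+} \geq x_q$ by construction, so $X_i \geq x_q^{+}$ implies $X_i \geq x_q$ by transitivity. The $Y$ side is symmetric. If $Y_i \leq y_q$, then $Y_i$ lies in $\{y \in U_Y \mid y \leq y_q\}$, so $Y_i \leq y_q^{-}$. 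Conversely, $y_q^{-} \leq y_q$ gives the implication back. Both sets are finite and nonempty when the boundaries are defined, so the min and max exist.

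It remains to treat the case where a boundary is undefined, which the paper says yields an empty valid set. If $x_q^{+}$ is undefined, no $x \in U_X$ satisfies $x \geq x_q$. Since every $X_i \in U_X$, no object satisfies $X_i \geq x_q$, and the raw valid set is empty. This matches the convention that the canonical query returns the empty set. The case where $y_q^{-}$ is undefined is analogous.

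The only real subtlety is the step that uses membership $X_i \in U_X$ and $Y_i \in U_Y$. This is why the snapping must be to transformed data coordinates, including after any sign negation from Table~\ref{tab:expanded-predicate-mapping}, rather than to raw endpoints. We should state explicitly that $U_X$ and $U_Y$ are computed after the mapping. With that noted, the argument is independent of which relation was mapped.
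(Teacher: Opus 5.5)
Your proof is correct and takes essentially the same approach as the paper. The paper's observation that no transformed coordinate lies strictly between $x_q$ and $x_q^{+}$ (resp.\ between $y_q^{-}$ and $y_q$) is your minimality/maximality argument using $X_i\in U_X$ and $Y_i\in U_Y$. Your treatment of the undefined-boundary case adds a detail that the paper handles only in the remark before the lemma.
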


\begin{proof}
By definition of $x_q^{+}$, no transformed $X$-coordinate lies strictly between $x_q$ and $x_q^{+}$.
Thus $X_i\geq x_q$ holds for an object if and only if $X_i\geq x_q^{+}$ holds.
Similarly, no transformed $Y$-coordinate lies strictly between $y_q^{-}$ and $y_q$, so $Y_i\leq y_q$ holds if and only if $Y_i\leq y_q^{-}$ holds.
The raw and canonical transformed queries therefore select the same objects.
\end{proof}

Canonicalization is performed after the semantic transformation, so signed mappings need no special case.
If $Y_i=-t_i$ and $y_q=-t_q$, taking the predecessor of $y_q$ in transformed $Y$ order correctly corresponds to the required boundary in the original $t_i$ order.
Consequently, the number of canonical states is at most $|U_X|\cdot |U_Y|\leq n^2$.
Some states may be empty under a particular workload, but the product grid gives a clean contract for the index.
The next section shows how UDG avoids materializing one proximity graph per canonical state by storing query-state ranges on graph edges.

\section{Unified Dominance Graph}
This section turns the dominance abstraction from Section~\ref{sec:algebraic-mapping-rule} into a graph index.
We fix one relation mapping and construct one UDG instance in the corresponding transformed dominance space. The following construction and search algorithms are relation-independent after this mapping step. Other relations are supported under the same algorithms with a different endpoint transformation.
We first define this query-time contract, then describe UDGConstruction, and finally analyze the structural guarantee.

\subsection{UDG Index and Query}
\label{sec:udg-index-search}

For a mapped query, let $(a,c)=(x_q^{+},y_q^{-})$ be its canonical state, where $a$ is the active $X$ threshold and $c$ is the active $Y$ boundary.
The valid set is
$V(a,c)=\{i \mid X_i \geq a \wedge Y_i \leq c\}$.
UDG uses a single dominance-search abstraction for the mapped relation. After the semantic mapping step, containment, overlap, and signed-coordinate relations differ only in how their endpoints are mapped to $(X,Y)$; the graph-search semantics are identical.

Each node represents a data vector.
For a source node $u$, its adjacency list contains tuples $(l,r,v,b,e)\in G[u]$, where $v$ is a neighboring node.
The tuple is active for $(a,c)$ iff $a\in[l,r]$ and $c\in[b,e]$.
All label values are canonical transformed coordinates from $U_X$ and $U_Y$.
Thus, for containment the labels are interpreted over $(s_i,t_i)$, for overlap over $(t_i,s_i)$, and for signed mappings over values such as $-s_i$ or $-t_i$.
The active subgraph for $(a,c)$ contains exactly the edges whose label rectangle contains that canonical state.

Figure~\ref{fig:udg-state-activation} gives a compact example under containment filtering.
The query interval $[4,13]$ maps to canonical state $(a,c)=(4,13)$, and the left panel shows the corresponding dominance region, where $v_8$, $v_9$, and $v_{10}$ are valid.
The right panel applies the same state to the labeled graph: solid edges are active because their label rectangles contain $(a,c)$, while dashed edges are inactive for this state.
Thus UDG obtains the state-specific subgraph by label tests rather than by materializing a separate graph for each query state.
\begin{figure}[t]
\centering
\subfloat[Dominance state.]{%
    \includegraphics[width=0.46\linewidth]{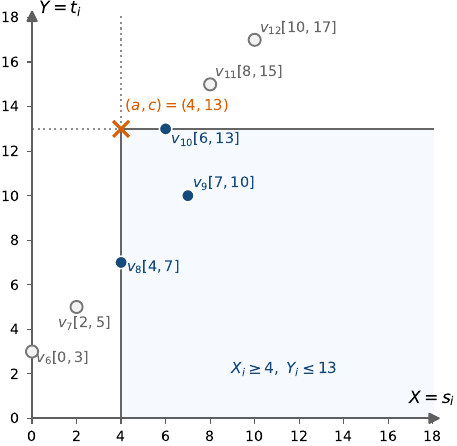}
    \label{fig:udg-dominance-state}
}
\hfill
\subfloat[Edge activation.]{%
    \includegraphics[width=0.50\linewidth]{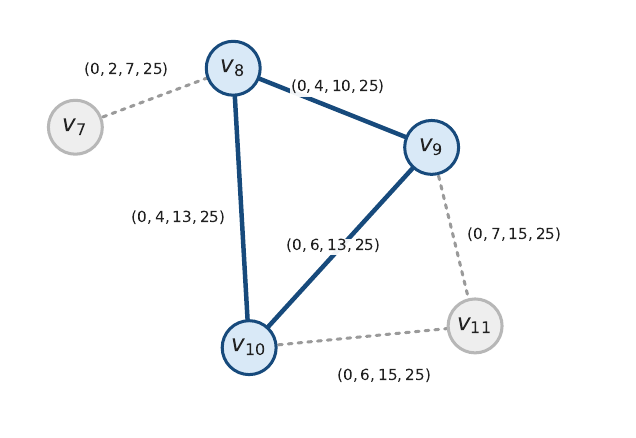}
    \label{fig:udg-edge-activation}
}
\caption{UDG query-state view for containment. Solid graph edges are active at $(a,c)=(4,13)$; dashed edges are inactive.}
\label{fig:udg-state-activation}
\end{figure}

UDG only requires a valid entry point for the canonical state $(a,c)$.
In practice, entry points can be maintained by a compact auxiliary table.
If no valid entry point exists, then $V(a,c)$ is empty.

Algorithm~\ref{alg:udg-search} shows the edge-filtered graph search used by UDG.
For a user query, UDG first maps $\rho$ and $[s_q,t_q]$ to raw transformed coordinates and canonicalizes them to $(a,c)$.
The search then follows the standard graph-based ANNS procedure, except that an outgoing edge is explored only when its label contains $(a,c)$.
The distance function $\delta(\cdot,\cdot)$ is always computed on the original embedding vectors; coordinate transformation affects only filter coordinates and edge activation.

\begin{algorithm}[ht]
\caption{\textsc{UDGSearch}$(G,q,a,c,ep,K)$}
\label{alg:udg-search}
\KwIn{$G$: UDG index; $q$: query vector; $(a,c)$: canonical state; $ep$: valid entry point; $K$: search-pool size}
\KwOut{$\mathit{ann}$: up to $K$ approximate nearest-neighbor candidates of $q$ in state $(a,c)$}

Mark $ep$ as visited\;
$pool \leftarrow \{ep\}$\;
$ann \leftarrow \{ep\}$\;
\While{$pool$ is not empty}{
    $v \leftarrow$ the vector closest to $q$ in $pool$, and pop $pool$\;
    $u \leftarrow$ the vector farthest from $q$ in $ann$\;
    \lIf{$|ann|=K$ and $\delta(v,q)>\delta(u,q)$}{\textbf{break}}
    \ForEach{unvisited $o$ with $(l,r,o,b,e)\in G[v]$}{
        \If{$a \in [l,r]$ and $c \in [b,e]$}{
            Mark $o$ as visited\;
            $u \leftarrow$ the vector farthest from $q$ in $ann$\;
            \If{$|ann|<K$ or $\delta(o,q)<\delta(u,q)$}{
                push $o$ to $pool$ and $ann$\;
                \lIf{$|ann|>K$}{pop the vector farthest from $q$ in $ann$}
            }
        }
    }
}
\Return $ann$\;
\end{algorithm}

The algorithm contains no relation-dependent step after canonicalization.
The label-containment test is the only filter test performed during traversal.
Section~\ref{sec:exact-construction-setting} explains how UDG constructs labels so that every active edge connects valid objects in $V(a,c)$, making this test sufficient for enforcing the interval filter during graph search.

\subsection{UDGConstruction}
\label{sec:exact-construction-setting}

We next describe \textsc{UDGConstruction}, the construction procedure that creates dominance labels.
The construction follows the insertion style of graph-based ANNS indexes: for each new object, it searches the current graph for candidate neighbors, prunes them to obtain diversified neighbors, and stores edges.
The difference is that UDG attaches each emitted edge to a rectangle of canonical query states, so one labeled edge can represent its copies across many state-specific graphs.

Fix a relation mapping and the transformed coordinates from Section~\ref{sec:algebraic-mapping-rule}.
Let $\mathcal{D}^{Y}=(v_1,\ldots,v_n)$ be the objects sorted by increasing transformed $Y$ value with deterministic tie breaking.
We use $U_X[i]$ to denote the $i$-th smallest canonical $X$ threshold in $U_X$.
UDGConstruction scans objects once in this transformed $Y$ order.
For any fixed $Y$ boundary $c$, valid objects form a prefix of this order.
When inserting $v_j$, every object already in the current graph belongs to the previous $Y$ prefix, so the construction only needs to decide for which canonical $X$ thresholds the new proximity edges should be active.

Algorithm~\ref{alg:udg-construction} constructs the compressed graph directly.
For each inserted object $v_j$, the inner loop starts from the current left endpoint $x_L$ and considers only thresholds with $x_L\leq X(v_j)$, because larger thresholds would exclude $v_j$ itself.
At state $(x_L,Y(v_{j-1}))$, UDG searches the current active graph for candidates of $v_j$.
The valid entry point is chosen from the current prefix, so the search starts from an already inserted object and never uses future nodes.
After obtaining $ann$, UDG sets
$x_R=\min(\{X(v_j)\}\cup\{X(u)\mid u\in ann\})$.
For every threshold between $x_L$ and $x_R$, the new object and all returned candidates still satisfy the transformed lower-bound condition.
UDG therefore emits one $X$ label interval $[x_L,x_R]$ for the pruned neighbors selected from $ann$, and then sets the next $x_L$ to the first canonical $X$ value strictly larger than $x_R$.
This leap is the compression step: consecutive canonical states that share the same local construction candidates are represented by one labeled edge interval.

\begin{algorithm}[t]
\caption{\textsc{UDGConstruction$(\mathcal{D}^{Y},U_X,M)$}}
\label{alg:udg-construction}
\KwIn{$\mathcal{D}^{Y}=(v_1,\ldots,v_n)$ sorted by increasing transformed $Y$;
      $U_X$ sorted by increasing transformed $X$; $M$; the max degree}
\KwOut{$G$: UDG index}

$G\leftarrow$ empty labeled graph over $\mathcal{D}$\;
\For{$j\leftarrow 2$ \KwTo $n$}{
    $i\leftarrow 1$\;
    \While{$i\leq |U_X|$}{
        $x_L\leftarrow U_X[i]$\;
        \lIf{$x_L>X(v_j)$}{\textbf{break}}
        $v_{\mathrm{ep}}\leftarrow$ a valid entry point for state $(x_L,Y(v_{j-1}))$\;
        \lIf{$v_{\mathrm{ep}}=\emptyset$}{\textbf{break}}
        $ann\leftarrow \scalebox{0.8}{$\textsc{UDGSearch}(G,v_j,x_L,Y(v_{j-1}),v_{\mathrm{ep}},M)$}$\;
        $x_R\leftarrow \min(\{X(v_j)\}\cup\{X(u)\mid u\in ann\})$\;
        \ForEach{$u\in \textsc{Prune}(v_j,ann,M)$}{
            add $(x_L,x_R,u,Y(v_j),Y(v_n))$ to $G[v_j]$\;
            add $(x_L,x_R,v_j,Y(v_j),Y(v_n))$ to $G[u]$\;
        }
        Set $i$ to the first index with $U_X[i]>x_R$; if none exists, \textbf{break}\;
    }
}
\Return $G$\;
\end{algorithm}

Each emitted tuple has the label rectangle $[x_L,x_R]\times[Y(v_j),Y(v_n)]$ in the transformed canonical space.
The $X$ interval records the thresholds for which the current construction candidates remain valid.
The $Y$ interval records when the insertion of $v_j$ becomes visible to a canonical state.
If a query has canonical boundary $c\geq Y(v_j)$, then $v_j$ is included on the $Y$ side; every linked previous object also satisfies $Y(u)\leq Y(v_j)$ by the construction order.
If $c<Y(v_j)$, the edge is inactive, matching a graph state in which $v_j$ has not yet been inserted.

All orders and labels are defined in transformed coordinates.
Therefore signed mappings require no special handling.
For example, when $Y_i=-t_i$, scanning by increasing $Y_i$ processes larger original end points first, which is exactly the order induced by the normalized condition $Y_i\leq y_q$.
Similarly, the sweep over $U_X$ is always a sweep toward more restrictive lower-bound thresholds, regardless of which original endpoint or sign produces $X_i$.

\subsection{Lossless Compression Analysis}
\label{sec:lossless-compression-analysis}

We now state the structural lossless-emulation guarantee for UDGConstruction.
For each canonical state $(a,c)$, the reference graph is the proximity graph that would be built directly on $V(a,c)$ using the same insertion order and pruning rule.
The guarantee is edge-level: under the construction-time search condition stated in Theorem~\ref{thm:structural-lossless-emulation}, the active UDG subgraph equals this reference graph.
It does not assert exact query-time nearest-neighbor answers.

For a canonical state $(a,c)$, let $G_{\mathrm{UDG}}(a,c)$ be the active subgraph obtained by keeping exactly the edge tuples $(l,r,v,b,e)$ such that $a\in[l,r]$ and $c\in[b,e]$.
Let $G_{\tau}(a,c)$ denote the dedicated insertion-only graph built directly on $V(a,c)=\{i\mid X_i\geq a \wedge Y_i\leq c\}$, using the same $Y$-ordered insertion sequence and the same deterministic \textsc{Prune} rule.

\begin{lemma}[Edge Validity]
\label{lem:edge-validity}
For any edge label emitted by Algorithm~\ref{alg:udg-construction}, if the edge is active for canonical state $(a,c)$, then both endpoints of the edge belong to $V(a,c)$.
\end{lemma}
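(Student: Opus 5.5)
The plan is to fix one iteration of Algorithm~\ref{alg:udg-construction}: the insertion of $v_j$, a current left endpoint $x_L$, the returned set $ann$, and the right endpoint $x_R$. Every tuple emitted in that iteration, in either direction, has label rectangle $[x_L,x_R]\times[Y(v_j),Y(v_n)]$ and connects $v_j$ to some $u\in\textsc{Prune}(v_j,ann,M)\subseteq ann$. If this edge is active for $(a,c)$, then $x_L\le a\le x_R$ and $Y(v_j)\le c\le Y(v_n)$. So it suffices to show $X(v_j)\ge a$, $Y(v_j)\le c$, $X(u)\ge a$ and $Y(u)\le c$. Membership of each endpoint in $V(a,c)$ then follows by definition.

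The $X$ side follows directly from the definition of $x_R$. Since $x_R=\min(\{X(v_j)\}\cup\{X(w)\mid w\in ann\})$, we have $X(v_j)\ge x_R\ge a$, and also $X(u)\ge x_R\ge a$ because $u\in ann$.

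For the $Y$ side, the new object satisfies $Y(v_j)\le c$ directly from the lower end of the $Y$ label. For $u$, I would argue that $u$ is an already-inserted object, i.e.\ $u=v_{j'}$ with $j'<j$. Because $\mathcal{D}^{Y}$ is sorted by increasing $Y$, this gives $Y(u)\le Y(v_j)\le c$.

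Showing that $u$ is already inserted needs a small invariant, which I would prove by induction on $j$: when $v_j$ is being processed, every node reachable by $\textsc{UDGSearch}$ from the entry point is among $v_1,\dots,v_{j-1}$. The entry point is chosen from the current prefix. Every edge stored so far was created during the insertion of some $v_{j''}$ with $j''<j$ (or earlier in the current iteration, which touches only $v_j$ and earlier nodes), and so it connects two nodes of index at most $j''$. There is one subtlety. Edges added to $G[v_j]$ within earlier $x_L$-iterations of the same $j$ could make $v_j$ itself reachable. This cannot happen: their $Y$ label starts at $Y(v_j)$, while the search state uses $c'=Y(v_{j-1})$, so such an edge is active only when $Y(v_j)=Y(v_{j-1})$. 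Even in that tie case the search query is $v_j$ itself, so the search would return $v_j$ only as a self-candidate. To close this gap, I would either note that $\textsc{Prune}$ and the construction exclude self-loops, or observe that $Y(v_j)\le c$ holds anyway, so $v_j\in V(a,c)$ regardless.

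The main obstacle is making this invariant precise, namely that $ann$ contains only earlier objects. Once it holds, every step above is a one-line inequality check.
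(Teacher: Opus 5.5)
Your proposal is correct and matches the paper's proof. Both read off the label rectangle $[x_L,x_R]\times[Y(v_j),Y(v_n)]$, use $x_R=\min(\{X(v_j)\}\cup\{X(w)\mid w\in ann\})$ for the $X$-bounds, and use the $Y$-sorted insertion order to get $Y(u)\le Y(v_j)\le c$; the paper simply asserts that $u$ was inserted before $v_j$, and your reachability invariant supplies that justification. Your tie-case worry never arises: each later round for the same $v_j$ starts at an $x_L$ strictly larger than the $x_R$ of every earlier round, so edges already emitted for $v_j$ are never active in its later searches.
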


\begin{proof}
Consider a label emitted while inserting object $v_j$ at threshold $x_L$.
For each pruned neighbor $u$, UDG emits labels with $X$ interval $[x_L,x_R]$ and $Y$ interval $[Y(v_j),Y(v_n)]$, where $x_R=\min(\{X(v_j)\}\cup\{X(w)\mid w\in ann\})$.
If the edge is active for $(a,c)$, then $x_L\leq a\leq x_R$ and $Y(v_j)\leq c$.
Since $x_R\leq X(v_j)$, object $v_j$ satisfies $X(v_j)\geq a$ and $Y(v_j)\leq c$.
Since $u\in ann$, we also have $x_R\leq X(u)$, and hence $X(u)\geq a$.
Moreover, $u$ was inserted before $v_j$, so $Y(u)\leq Y(v_j)$ under the fixed order.
Thus $Y(u)\leq c$.
Both endpoints satisfy the dominance predicate.
The same argument applies to the reverse directed label.
\end{proof}

We analyze our algorithm under the Accurate Search Assumption (ASA), i.e.
during construction, each search returns the exact nearest neighbors for the queried canonical state. We make this assumption because if otherwise, we could not accurately assess the impact of the approximation of the nearest neighbor search to the index structure. The same assumption is used in prior works of range-filtering ANNS~\cite{SeRF-2024,dynamic_serf}, where the lossless-compression is also achieved under ASA assumption.

\begin{theorem}[Structural Lossless Emulation]
\label{thm:structural-lossless-emulation}
Under ASA, for every canonical state $(a,c)\in U_X\times U_Y$,
the active UDG subgraph $G_{\mathrm{UDG}}(a,c)$ is edge-identical to
the dedicated graph $G_{\tau}(a,c)$. In other words, the graph produced by Algorithm~\ref{alg:udg-construction} is a structural lossless compression of the canonical graph family.
\end{theorem}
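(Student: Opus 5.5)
We fix a canonical state $(a,c)\in U_X\times U_Y$ and argue by induction along the $Y$-ordered insertion sequence $v_1,\dots,v_n$. The dedicated graph $G_{\tau}(a,c)$ inserts exactly the objects of $V(a,c)$ in this order. Because $V(a,c)$ restricted to the $Y$ side is a prefix, its members are the $v_j$ with $Y(v_j)\le c$ and $X(v_j)\ge a$. For each $j$ we compare two graphs: the dedicated graph $G_{\tau}^{(j)}(a,c)$ after the valid objects among $v_1,\dots,v_j$ have been inserted, and the subgraph of the UDG after iteration $j$ activated by the state $(a,Y(v_j))$, restricted to labels whose $Y$ interval has lower endpoint at most $Y(v_j)$. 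The invariant is that these two graphs are edge-identical. Once $j$ reaches the last index with $Y(v_j)\le c$, the labels $[Y(v_j),Y(v_n)]$ of later insertions are inactive at $c$, so the invariant yields $G_{\mathrm{UDG}}(a,c)=G_{\tau}(a,c)$.

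For the inductive step we split into cases on $v_j$.

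\emph{Case 1: $v_j\notin V(a,c)$ on the $X$ side, i.e.\ $X(v_j)<a$.} The dedicated graph does not change. In UDG, every label emitted while inserting $v_j$ has $x_R\le X(v_j)<a$, so none of these labels is active at $a$. This is Lemma~\ref{lem:edge-validity} in contrapositive form.

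\emph{Case 2: $X(v_j)\ge a$.} The key claim is that the while-loop of Algorithm~\ref{alg:udg-construction} visits exactly one threshold $x_L\le a$ whose emitted interval $[x_L,x_R]$ contains $a$. This holds because successive intervals are disjoint and consecutive in $U_X$: the loop starts at $U_X[1]$ and each leap moves to the successor of $x_R$. The loop cannot stop before covering $a$. The break at $x_L>X(v_j)$ is never reached first, since $a\le X(v_j)$. An empty entry point at state $(x_L,Y(v_{j-1}))$ implies, by monotonicity of dominance, that no valid object exists at $(a,Y(v_{j-1}))$, and in that case the dedicated graph also gets no edge. Fix this covering threshold $x_L$. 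By the inductive hypothesis together with Lemma~\ref{lem:canonical-query-equivalence}-style monotonicity, the active graph at $(x_L,Y(v_{j-1}))$ searches over $V(x_L,Y(v_{j-1}))\supseteq V(a,Y(v_{j-1}))$. By ASA, the returned set $ann$ is the exact $M$-nearest set of $v_j$ within $V(x_L,Y(v_{j-1}))$. Since $x_R=\min X$ over $ann\cup\{v_j\}$ and $a\le x_R$, every element of $ann$ lies in $V(a,Y(v_{j-1}))$. Because $V(a,\cdot)\subseteq V(x_L,\cdot)$, $ann$ is also the exact $M$-nearest set within the smaller valid set; ties are handled by the deterministic ordering. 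The dedicated graph therefore runs the same deterministic \textsc{Prune} on the same candidate list and adds the same forward and reverse edges, which completes the induction.

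\textbf{Main obstacle.} The delicate point is the "same candidate set" argument in Case 2. ASA must be read as returning the exact top-$M$ (or all, if fewer) in the searched state. The justification that the top-$M$ of a superset, when it lies entirely inside a subset, equals the top-$M$ of that subset needs a consistent tie-breaking convention shared by the search and \textsc{Prune}. We must also handle the situation where $|ann|<M$ because fewer valid objects exist. In that case $ann$ is the whole of $V(x_L,Y(v_{j-1}))$, and $a\le x_R$ forces equality of the two valid sets, so the argument still goes through.

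A second issue is the reverse edges added to $G[u]$. The dedicated graph as defined adds reverse edges without re-pruning. We would state that both constructions treat reverse insertion identically, so the edge multisets coincide.
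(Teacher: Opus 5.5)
Your proof is correct and takes essentially the same route as the paper's proof: induction along the $Y$-ordered insertion sequence, the unique leap interval $[x_L,x_R]$ containing $a$, and the observation that $a\le x_R$ puts all of $ann$ inside the smaller valid set, so under ASA it is also the exact candidate set at threshold $a$; deterministic \textsc{Prune} and disjoint leap intervals then rule out extra active edges. Your explicit handling of the empty-entry-point break, the $|ann|<M$ case, and reverse edges is more detailed than the paper. One thing to tidy: describe the searched set as the previously inserted objects with $X\ge x_L$ rather than $V(x_L,Y(v_{j-1}))$, because when $Y$ values tie, the latter also contains $v_j$ and objects not yet inserted.
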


\begin{proof}
Fix a canonical state $(a,c)$. We prove the claim by induction over the
$Y$-ordered insertion sequence $\tau$. After any processed prefix $P$,
the active UDG subgraph at $(a,c)$ induced by $P$ is identical to the
graph obtained by running the dedicated constructor on
$P\cap V(a,c)$.

The base case is empty. Consider the next object $v$. If $v\notin V(a,c)$,
then the dedicated constructor skips $v$. UDG also creates no active edge
for this state: if $Y(v)>c$, every label emitted for $v$ starts after $c$
on the $Y$ dimension; if $X(v)<a$, every emitted $X$ interval ends no later
than $X(v)$ and therefore cannot contain $a$. Thus the invariant is unchanged.

Now suppose $v\in V(a,c)$. The previous valid candidates for inserting $v$
are exactly the earlier objects $u$ with $X(u)\geq a$; the $Y$ condition is
automatic because all earlier objects have $Y(u)\leq Y(v)\leq c$. By the
induction hypothesis, the active UDG subgraph before inserting $v$ is the
same as the corresponding dedicated graph over these candidates.

Let $[a_0,x_R]$ be the leap interval generated by UDGConstruction that
contains $a$. The construction sets
$x_R=\min(\{X(v)\}\cup\{X(u)\mid u\in ann\})$, so every object in $ann$
remains valid for all thresholds in this interval. Moving the threshold from
$a_0$ to any $a\in[a_0,x_R]$ can only remove objects outside $ann$ and cannot
introduce new candidates. Under ASA, the candidate set used for $a_0$ is
therefore the same candidate set that the dedicated constructor would use
for $a$.

Since \textsc{Prune} is deterministic, both constructions select the same
neighbors of $v$. UDG labels the corresponding directed edges with a rectangle
that contains $(a,c)$, so these edges are active for the fixed state. The leap
intervals are disjoint and cover the canonical thresholds processed for $v$,
so no additional active edge can appear for $(a,c)$.

Thus the induction invariant holds after inserting $v$. After all objects are
processed, $G_{\mathrm{UDG}}(a,c)$ and $G_{\tau}(a,c)$ have the same directed
edge set. Since $(a,c)$ was arbitrary, the theorem follows.
\end{proof}

The theorem gives the precise meaning of structural lossless emulation in UDG: edge labels compress the canonical graph family without changing the active edge set for any canonical state. The proof uses transformed coordinates. Thus containment, overlap, and signed-coordinate mappings follow the same argument after the semantic mapping step.

In the worst case, one inserted object can generate $O(n)$ disjoint $X$-threshold intervals, each emitting at most $O(M)$ directed labels after pruning.
Thus construction has worst-case index size $O(n^2M)$ and index time $O(n^2M^2)$ excluding the online search.

\begin{theorem}
Assuming interval coordinates are independent of vector-space neighbor selection, the expected total number of rounds in Lines~4--14 of Algorithm~\ref{alg:udg-construction} is $O(n\log n)$. Consequently, in the average case, UDG has index size $O(nM\log n)$, makes $O(n\log n)$ calls to the online search in Algorithm~\ref{alg:udg-search}, and takes $O(nM^2\log n)$ indexing time excluding online search, assuming $O(1)$ entry-point lookup. 
\end{theorem}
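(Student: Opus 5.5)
The plan is to bound, for each inserted object $v_j$, the expected number of leap rounds of the inner while-loop, and then sum over $j$. Fix $j$ and condition on the set of already inserted objects $P_j=\{v_1,\ldots,v_{j-1}\}$. Each round starts at a threshold $x_L$ and, under ASA, obtains $ann$ as the $M$ exact nearest neighbors of $v_j$ among $\{u\in P_j \mid X(u)\geq x_L\}$. It then jumps past $x_R=\min(\{X(v_j)\}\cup\{X(u)\mid u\in ann\})$.

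The key observation is that a round can only end at the $X$-coordinate of some object that is one of the $M$ nearest neighbors of $v_j$ within a suffix (in $X$ order) of $P_j$. Sort the objects of $P_j$ with $X(u)\leq X(v_j)$ by decreasing $X$. Call $u$ a \emph{record} if it is among the $M$ closest to $v_j$ among all objects with $X$-value at least $X(u)$. Every $x_R\neq X(v_j)$ equals the $X$-value of a record, and distinct rounds produce strictly increasing $x_R$. So the number of rounds is at most the number of records plus one.

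By the independence assumption, the vector-space distance ranks of $P_j$ relative to $v_j$ form a uniformly random permutation with respect to the $X$ order. In that case, the $m$-th element from the top of the $X$ order is a record with probability $\min(1,M/m)$. This is the standard top-$M$ record argument, with ties broken by id. Summing gives an expected count of at most $M(1+H_{j-1})=O(M\log j)$ records for $v_j$.

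The main obstacle is matching the stated $O(n\log n)$ with no $M$ factor, which suggests that $M$ is treated as a constant. I would first attempt a sharper count. A round's $x_R$ is the minimum $X$ among the current top-$M$ set. After a jump, all $M$ members with $X\leq x_R$ could be replaced, and the records crossed between consecutive $x_R$ values do not each cost a round. I would try to show that the rounds correspond to ``minimum of a sliding top-$M$ set'' events, whose expectation is $O(\log j)$ by a backward-analysis argument: the object defining $x_R$ for the suffix starting at rank $m$ changes with probability $O(1/m)$. If that refinement fails, I would state the bound as $O(Mn\log n)$ and absorb $M=O(1)$ into the constant. In either case, summing $O(\log j)$ over $j\leq n$ gives $O(n\log n)$ expected rounds. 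Entry-point failures and the early break at $x_L>X(v_j)$ only reduce this count.

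The consequences follow directly. Each round makes exactly one call to Algorithm~\ref{alg:udg-search}, so there are $O(n\log n)$ online search calls. Each round emits at most $M$ pruned neighbors, with two labels each, so the index size is $O(nM\log n)$. With $O(1)$ entry-point lookup, the non-search work per round is the \textsc{Prune} call on $|ann|\leq M$ candidates, which costs $O(M^2)$ distance comparisons. Locating the next index after $x_R$ is handled by a pointer or precomputed rank, in amortized $O(1)$ or $O(\log n)$ time. This gives $O(nM^2\log n)$ indexing time excluding the online search. The bound holds by linearity of expectation over the rounds.
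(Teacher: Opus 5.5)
Your record-counting argument is the paper's proof. The paper also bounds the rounds for $v_j$ by $1+\sum_p I_p$, where $I_p$ indicates that the rank-$p$ boundary object is among the $M$ nearest neighbors within its $X$-suffix, uses $\Pr[I_p=1]\le\min\{1,M/(j-p)\}$ to get $O(M\log j)$, and then absorbs the fixed $M$ into the constant, exactly as your fallback does. The sharper $M$-free refinement you sketch is unnecessary and would not succeed in this random model: each round discards in expectation only about a $1/(M+1)$ fraction of the current valid suffix, namely the part up to the minimum-$X$ member of a uniformly random $M$-subset, so $\Theta(M\log j)$ rounds per insertion is the true order.
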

\begin{proof}
Consider the insertion of $v_j$, and let $z_j$ be the number of rounds of the while loop for this insertion. Let $P_j=\{v_1,\ldots,v_j\}$ be the current prefix in the transformed $Y$ order. Sort $P_j$ by transformed $X$ coordinate and write the order as $w_1,\ldots,w_j$. Let $r_j$ be the rank of $v_j$, i.e., $w_{r_j}=v_j$. The loop only considers thresholds no larger than $X(v_j)$, and therefore only ranks $p\le r_j$. For a rank $p<r_j$, under the independence assumption, the transformed $X$ order is independent of vector-space neighbor selection. Therefore, when there are $t=j-p$ valid previously inserted objects under the current threshold, the boundary object $w_p$ is equally likely to appear at any position in the neighbor ranking and is contained in the candidate set returned by the online search with probability at most $\min\{1,M/t\}$. Let $C_p=\{w_q\mid q\ge p,\ w_q\ne v_j\}$ be the valid previously inserted objects under the threshold starting at rank $p$. Then $|C_p|=j-p$. If the boundary object $w_p$ is not returned by the online search, the minimum $X$ coordinate among $v_j$ and the returned candidates is larger than $X(w_p)$, so the update of the threshold skips rank $p$. Thus rank $p$ can create an additional round only when $w_p$ appears in the returned candidate set. Let $I_p$ be the indicator of this event. The final round, where the threshold reaches $X(v_j)$, contributes at most one additional round. Hence $z_j\le 1+\sum_{p=1}^{r_j-1} I_p$. By the argument above, $\Pr[I_p=1]\le \min\{1,M/(j-p)\}$. Therefore, $\mathbb{E}[z_j]\le 1+\sum_{p=1}^{r_j-1}\min\{1,M/(j-p)\}\le 1+\sum_{t=1}^{j-1}\min\{1,M/t\}\le 1+M+M\sum_{t=M+1}^{j-1}1/t=O(M\log j)$. Since $M$ is a fixed index parameter independent of $n$, $\mathbb{E}[z_j]=O(\log j)$. Summing over all insertions gives $\mathbb{E}[\sum_{j=2}^{n}z_j]=\sum_{j=2}^{n}O(\log j)=O(n\log n)$. Each round calls the online search once, performs one $O(M^2)$ pruning step, and inserts at most $O(M)$ directed edge labels. Therefore, the expected number of online-search calls is $O(n\log n)$, the expected index size is $O(nM\log n)$, and the indexing time excluding online search is $O(nM^2\log n)$. This completes the proof. 
\end{proof}

\section{Practical Unified Dominance Graph}
\subsection{Practical Construction Optimizations}
\label{sec:practical-construction-heuristics}

The exact constructor in Algorithm~\ref{alg:udg-construction} sweeps many canonical \(X\) thresholds for each inserted object \(v\). At each new round, it performs a state-specific UDG search, computes the next sweep boundary from the returned candidate set, and then applies the HNSW-style pruning rule. 
Although the exact constructor provides the structural compression guarantee, running a state-specific search at every canonical threshold is expensive in practice. 
We therefore adopt construction optimizations, following recent RFANNS indexes~\cite{SeRF-2024,dynamic_serf}, to reduce the indexing cost of UDG.  

The first optimization reuses one construction candidate pool. We replace the excessive number of \textsc{UDGSearch} procedure with a single broad search \textsc{UDGSearch($G,v_t, -\inf, \inf, ep, Z)$}, for every new data point $v_t$ inserted. $Z$ is a parameter to find a set $ann$ of $Z$ approximate nearest neighbors of $v_t$. For the canonical X boundary \(x_L\), UDG scans $ann$ in increasing distance to \(v_t\) and keeps only candidates \(u\) satisfying \(X(u) \ge x_L\). The filtered candidates are then passed to \(\mathrm{PRUNE}\).  If no candidate exists in $ann$, we simply terminate the while condition and process the next insertion.

The second optimization increases the leap size along the $X$ dimension.
After pruning the valid candidates at threshold $a$, practical UDG advances according to the pruned neighbor set rather than the full candidate pool.
The conservative policy leaps to the leftmost pruned neighbor.
This reduces the number of construction states while keeping a shared $X$-label interval valid for all emitted neighbors.
The aggressive policy leaps toward the rightmost pruned neighbor, which further reduces construction time and index size but makes the construction more approximate.

UDG further adopts a MaxLeap variant. Let $N$ be the retained neighbor set after pruning. Instead of advancing to the smallest transformed $X$ in
$N$, MaxLeap advances the sweep to
$x_{\mathsf{leap}}=\max_{u\in N}X(u)$. To preserve interval validity, each edge $(v,u)$ is still labeled up to its own valid right boundary, $\min\{X(v), X(u),x_{\mathsf{leap}}\}$.  This larger leap reduces both the number of construction rounds and the number of edges.

These optimizations reduce index time and index size, but bring risk.
The main risk is under-connected active subgraphs in small-valid-set states, where graph search has fewer routing choices.
This motivates the patch edge mechanism below.
\subsection{Validity-Preserving Patch Edges}
\label{sec:patch-edges}

The construction optimizations in Section~\ref{sec:practical-construction-heuristics} reduce index size and construction time by reusing candidate pools and taking larger leap steps.
During this process, we observe a failure mode that is especially important for restrictive filters, i.e., queries that filter out most objects and leave only a small active valid subgraph.
For an inserted object $v$, the sweep is likely to stop before 
$v$ has sufficient neighbors for all valid canonical $X$ thresholds.
We call the remaining consecutive thresholds an uncovered range.
Although the existing edges are still used only under their valid labels, the active graph in this uncovered range may become poorly connected, causing suboptimal search performance under restrictive filters.

UDG addresses this issue with patch edges.
Consider one uncovered range $[a_L,a_R]$ after point $v$ is inserted.
UDG first builds a repair pool from previously inserted objects that are valid at the beginning of the range.
In the transformed dominance space, this condition is simply $X_u\geq a_L$; under containment this follows the start-endpoint order, and under overlap it follows the end-endpoint order.
Thus the repair pool is relation-independent after semantic mapping, while still selecting only candidates that can be active inside the uncovered range.

The repair pool is capped by $M\cdot K_{\mathsf{p}}$, where $K_{\mathsf{p}}$ is a parameter to control the patch candidate budget.
UDG computes the vector distance from $v$ to every candidate in the pool, but distance alone is not sufficient.
A nearest candidate may expire quickly as the canonical $X$ threshold increases, so using only nearest candidates can fail to repair the entire uncovered range.
Therefore, UDG first reserves up to two lifetime anchors, chosen by the largest lifetime rank, ignoring the distance.
These anchors preserve long-lived routing choices across the uncovered range.

The remaining patch slots are filled from the non-anchor candidates in increasing distance to $v$.
UDG applies the same HNSW-style diversity rule used in Lines~4--9 of Algorithm~\ref{alg:prune} to this metric part of the repair set. If there are fewer than $M$ patch neighbors after pruning, UDG backfills with the nearest remaining candidates from the repair pool.

Each selected neighbor $u$ is attached to the uncovered range with right boundary $r=\min\{X_v,X_u,a_R\}$ and a $Y$ label starting from $Y_v$. UDG adds a patch edge $(a_L,r,u,Y_v,Y(v_n))$ to $v$ with a corresponding reverse edge.
The patch edge is active for the canonical query state $(a,c)$, where $\min\{X_v,X_u\}\geq r\geq a$ and $Y_v\leq c$.
Since $u$ is a previously inserted valid candidate, $Y_u\leq Y_v\leq c$.
Both endpoints belong to $V(a,c)$. Edge-filtered search still traverses only objects satisfying the mapped dominance predicate.

Patch edges add routing choices for the uncovered range in the construction.
In the implementation, each object has at most one remaining uncovered range per transformed relation instance, and each range adds at most $M$ patch neighbors.
The additional index time is bounded by the capped repair pool.
Thus patch edges add $O(nM)$ directed edges, which does not change the average bound of UDG and is also dominated by the worst-case construction bound.

\section{Experiment}
This section evaluates UDG on standard ANNS benchmarks and interval-predicate workloads.
\subsection{Experimental Setup}
\label{sec:experimental-setup}

\textbf{Datasets.}
We use public vector datasets from ANN benchmarks~\cite{ann-benchmark}, including SIFT1M~\cite{sift-data}, DEEP1M~\cite{deep1M}, and DBpedia-OpenAI~\cite{dbpedia-entities-openai-1M}.
For these benchmark datasets, we assign synthetic closed intervals over a normalized endpoint domain of size $T$.
For the main synthetic setting, we generate Uniform interval metadata by sampling interval lengths uniformly up to a maximum of $0.01T$ and sampling starts uniformly over the feasible range conditioned on the sampled length; all methods use the same generated interval files, query workloads are selected by exact-count selectivity buckets, and real-world intervals are left uncapped.
We also include two real-world interval workloads: S\&P 500 stock ranges~\cite{hi-png-2025} and an event-time workload extracted from Nasdaq news using an LLM~\cite{ta-rag-2025,dong2024fnspid}.
For these workloads, we keep the real embeddings and data intervals, sample query vectors from the test embeddings, and synthesize query intervals by target selectivity for fair comparison.
Table~\ref{tab:dataset_stats} reports the dataset statistics.
Euclidean distance is used to measure vector similarity.

\begin{table}[t]
\centering
\caption{Dataset statistics.}
\begin{tabular}{@{}lrrr@{}}
\hline
Dataset & Base & Query & Dim. \\
\hline
SIFT1M & 1,000,000 & 10,000 & 128 \\
DEEP1M & 990,000 & 10,000 & 96 \\
DBpedia-OpenAI & 990,000 & 10,000 & 1536 \\
S\&P 500 & 1,445,788 & 14,603 & 384 \\
Nasdaq & 350,466 & 10,000 & 768 \\
\hline
\end{tabular}
\label{tab:dataset_stats}
\end{table}

\textbf{Workloads and query generation.}
We evaluate two primary interval predicates: containment, $s_i\geq s_q \wedge t_i\leq t_q$, and overlap, $t_i\geq s_q \wedge s_i\leq t_q$.
For each query vector, we generate an interval whose valid-set ratio matches a target selectivity $\sigma$.
Controlling selectivity directly is important because the same interval width can produce very different valid-set sizes under different endpoint distributions.
The main selectivity experiment sweeps $\sigma\in\{0.1\%,1\%,5\%,10\%,50\%\}$; later experiments use focused selectivity settings for their specific purpose.

\textbf{Baselines.}
We compare UDG with several hybrid-search ANNS methods.
Hi-PNG~\cite{hi-png-2025} is included for containment queries because it is a containment-oriented interval index.
PostFilter-HNSW first searches a global HNSW index and then removes candidates that do not satisfy the interval predicate~\cite{HNSW-2020}.
PreFilter builds a range tree over interval attributes; at query time, it enumerates the exact valid set and scans the valid vectors.
ACORN is a predicate-agnostic graph index for hybrid search, and we adapt it by treating each interval relation as the predicate during graph traversal~\cite{acorn-2024}.
We use $\gamma=12$ as recommended for filtered search.

\textbf{Metrics.}
We evaluate search quality by Recall@10, $\frac{|R\cap G|}{|G|}$, where $R$ is the retrieved result set and $G$ is the exact top-10 ground truth.
Query efficiency is measured by Queries Per Second (QPS). For each method, we sweep its method-specific query-time parameters and report the resulting Pareto frontier. Points outside the displayed plotting window are omitted for readability. Thus, an absent method in a panel indicates that none of its evaluated configurations yields a comparable operating point within the shown recall--QPS range.

\textbf{Environment.}
We implement the algorithms in C++ compiled with $g{++}$ using the $-\mathrm{Ofast}$ optimization flag and OpenMP parallelism.
All experiments are run on an Ubuntu server with four Intel Xeon Gold 6218 CPUs and 1.5 TB RAM, using 16 threads.
Unless otherwise specified, graph indexes use $M=32$ and construction width efconstruction$=128$, following recent containment-oriented interval ANNS work~\cite{hi-png-2025}.

\subsection{Experiment on Search Performance}
\begin{figure*}[ht]
    \centering
    \includegraphics[width=0.85\linewidth]{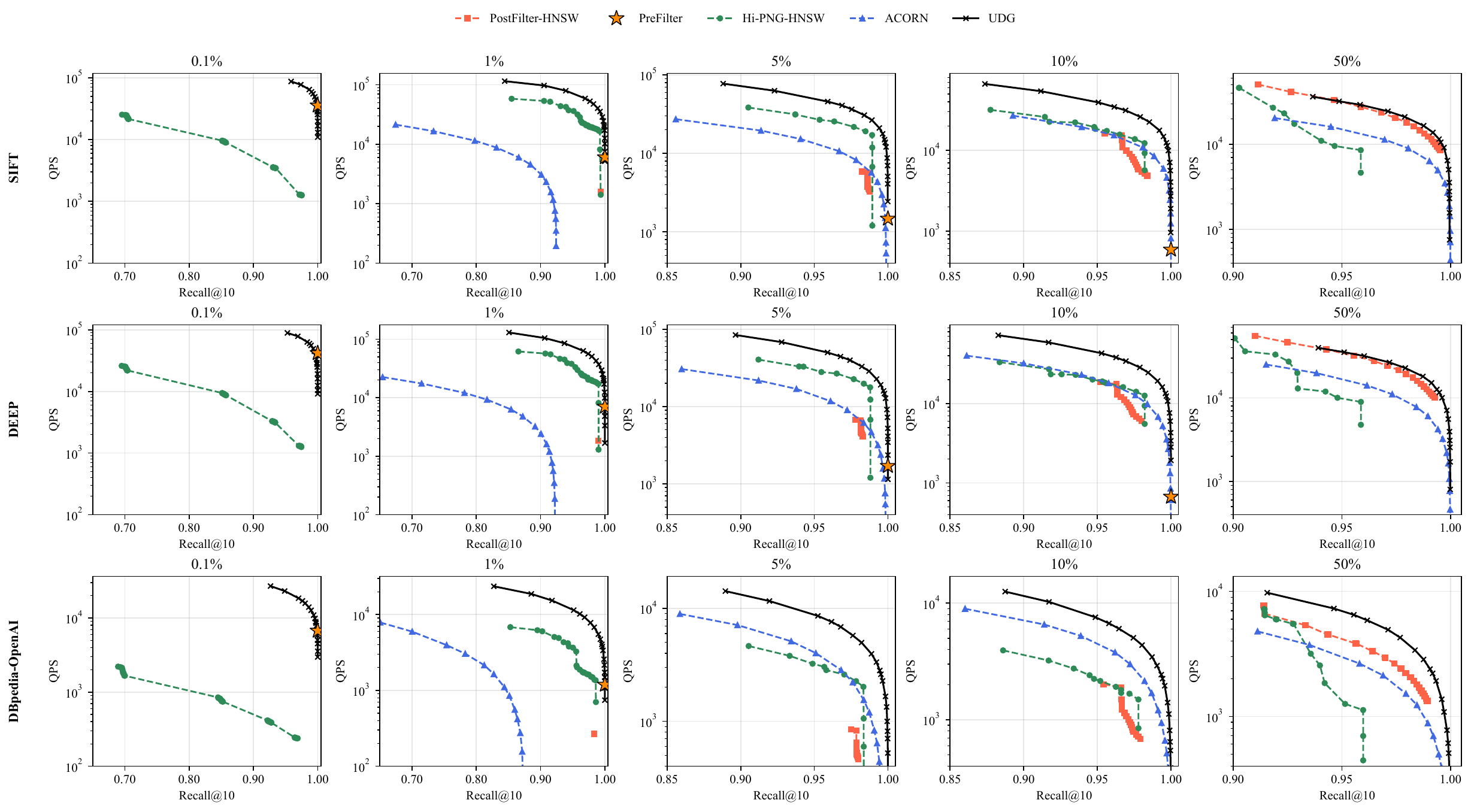}
    \caption{Search performance for containment queries under different selectivities.}
    \label{fig:search-performance-containment}

    \includegraphics[width=0.85\linewidth]{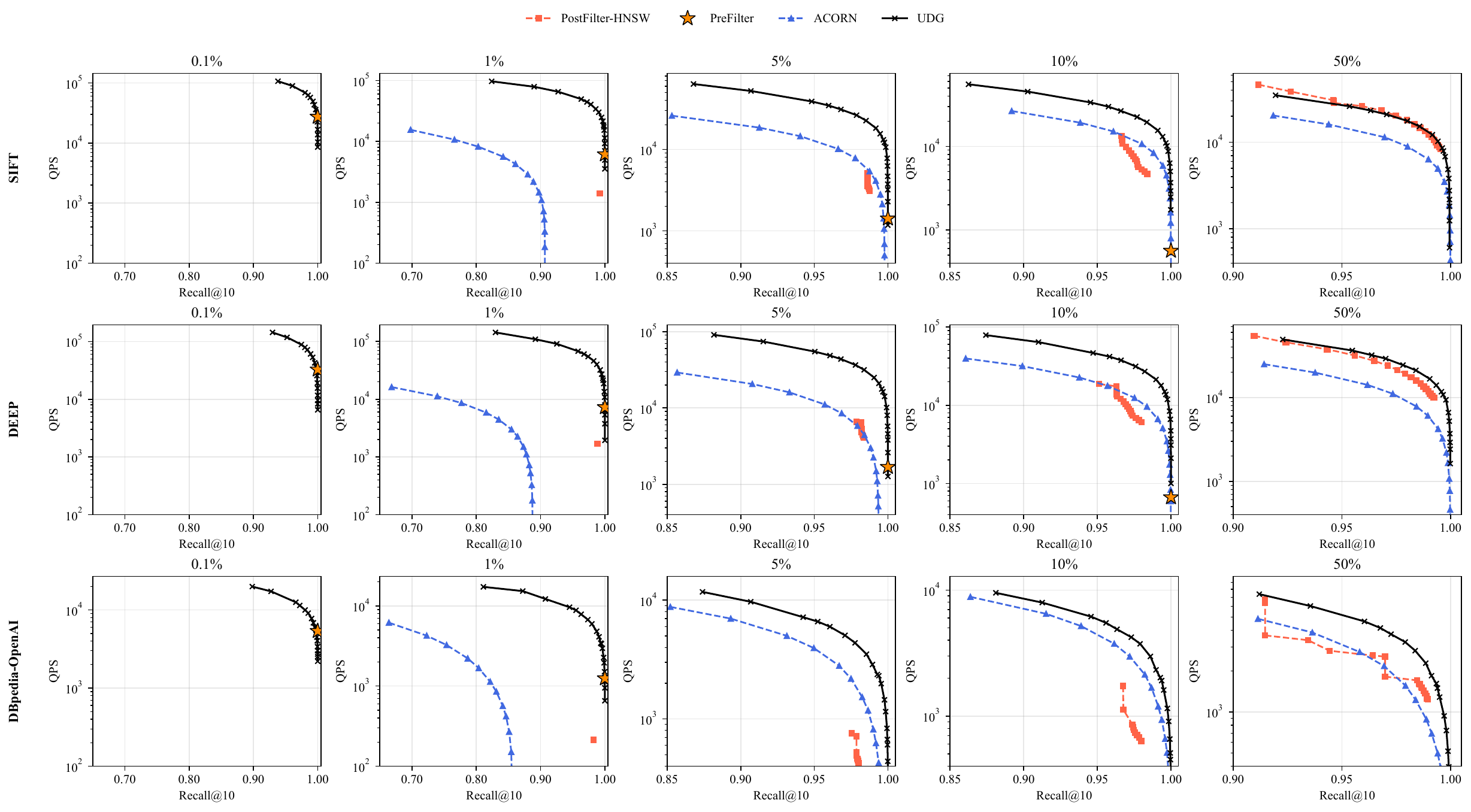}
    \caption{Search performance for overlap queries under different selectivities.}
    \label{fig:search-performance-overlap}
\end{figure*}
\textbf{Main Search Performance}
Figures~\ref{fig:search-performance-containment} and~\ref{fig:search-performance-overlap} report query performance Pareto frontiers for containment and overlap under five target selectivities. Points outside the displayed plotting window are omitted for readability.
Across SIFT1M, DEEP1M, and DBpedia-OpenAI, UDG consistently lies in the high-recall and high-throughput region of the frontier.
In particular, UDG often approaches Recall@10 of 1.0 even at $\sigma=0.1\%$, where the active valid subgraph is extremely small.
This small-$\sigma$ setting is the most sensitive case for UDG because any under-connected valid subgraph immediately hurts recall.

For containment, Figure~\ref{fig:search-performance-containment} shows that UDG usually provides the best or near-best QPS at high recall.
Exact pre-filtering is competitive in the most restrictive cases, as expected, because the valid set is small and can be scanned cheaply.
However, its cost grows with the valid-set size, while UDG keeps a navigable filtered graph and remains competitive across the full selectivity range.
PostFilter-HNSW and ACORN degrade more sharply under restrictive filters because many graph visits or candidate expansions are spent on invalid objects.
Hi-PNG is a strong containment-specific baseline, but UDG achieves comparable or higher throughput at high recall while using a dominance-search operator that also supports other interval predicates.

For overlap, Figure~\ref{fig:search-performance-overlap} uses the same UDG implementation with a different endpoint mapping, $(X_i,Y_i)=(t_i,s_i)$. 
UDG again maintains strong high-recall throughput across selectivities, whereas PostFilter-HNSW and ACORN lose efficiency when the valid set is small.
This result confirms that the gain does not come from a containment-only optimization, but from mapping different two-bound interval predicates into the same dominance-filtered graph search.

% \begin{figure*}[htbp]
\begin{figure*}[ht]
    \centering
    % First Subfigure
    \subfloat[Search performance under real-world interval workloads. \label{fig:sub_realWorld}]{%
        \includegraphics[width=0.9\linewidth]{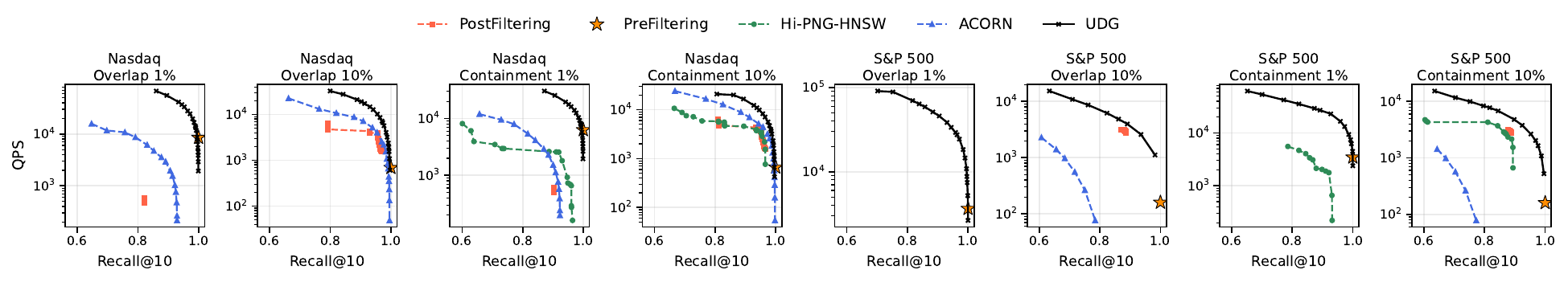}%
    }
    
    % Second Subfigure
    \subfloat[Search performance on additional interval relations. \label{fig:sub_additionInterval}]{%
        \includegraphics[width=0.9\linewidth]{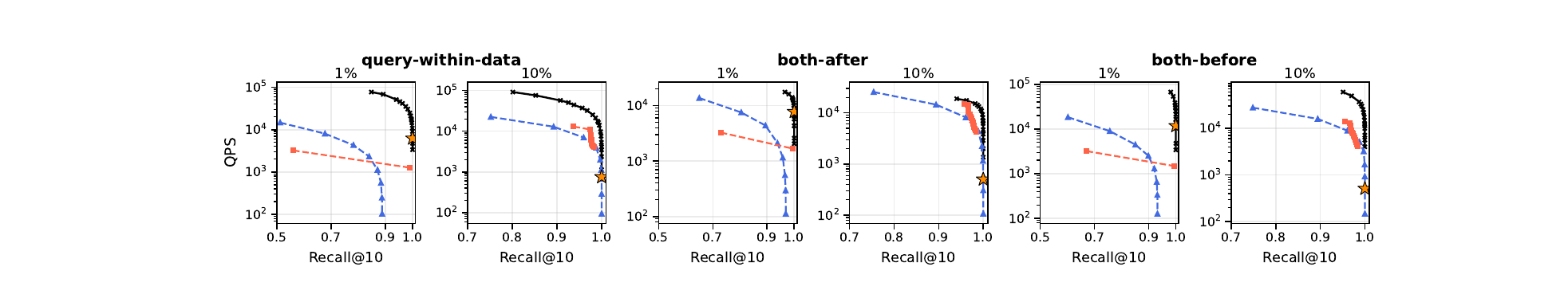}%
    }
    \caption{Search performance on real-world interval workloads and additional interval relations.}
    \label{fig:realWorld_addIntRel}
\end{figure*}
\textbf{Real-world uncapped interval workloads.}
Figure~\ref{fig:sub_realWorld} evaluates UDG on two real-world interval workloads: Nasdaq event-time retrieval and S\&P 500 stock ranges. Unlike the synthetic workloads, these experiments keep the original data intervals without imposing the $0.01T$ length cap, while query intervals are still selected by selectivity buckets for controlled comparison. 
UDG remains effective across the real-world workloads. On both Nasdaq and S\&P 500 datasets, UDG consistently lies on the strongest high-recall frontier for both predicates and selectivities. PreFilter reaches the highest-recall point, but with substantially lower throughput, while UDG still provides a stronger recall--QPS trade-off. These results show that UDG's advantage is not tied to capped synthetic intervals, but extends to real-world interval distributions.

\textbf{Generality beyond the Primary Predicates.}
Figure~\ref{fig:sub_additionInterval} evaluates three additional closed two-bound interval relations on SIFT1M.
This experiment complements the multi-dataset containment and overlap evaluation by checking whether the semantic mapping layer can reuse the same dominance-search operator across different endpoint mappings.
UDG keeps high-recall throughput on these relations, while PostFilter-HNSW and ACORN are less stable when the valid set is small.
The result supports the main abstraction claim that interval semantics can be handled by a shared dominance-search operator after semantic mapping.

\begin{figure}[t]
    \centering
    \includegraphics[width=0.8\linewidth]{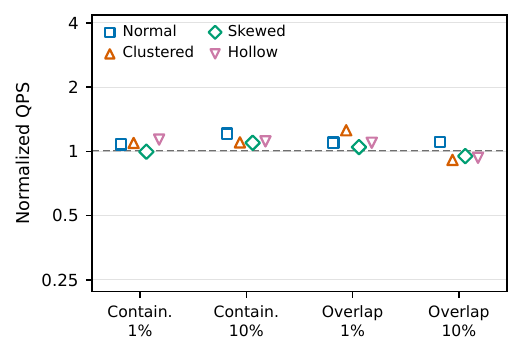}
    \caption{Normalized QPS under interval metadata distributions on SIFT1M (Recall@10 $\geq 0.95$); dashed line: Uniform baseline.}
    \label{fig:interval_distribution_robustness}
\end{figure}
\textbf{Robustness to Interval Metadata Distributions.}
The main experiment uses Uniform interval metadata.
We next vary the interval-attribute distribution on SIFT1M, using Normal, Skewed, Clustered, and Hollow distributions for containment and overlap at 1\% and 10\% selectivity.
Figure~\ref{fig:interval_distribution_robustness} reports QPS normalized by the Uniform workload under the same predicate and selectivity.
UDG remains close to the Uniform baseline across the tested distributions.
Several containment settings are slightly faster than the Uniform workload, while the visible slowdowns are modest and occur mainly for overlap at 10\% selectivity.
Because selectivity is controlled, this comparison isolates the effect of interval-metadata shape.
These results indicate that UDG's efficiency is not an artifact of the Uniform interval distribution.

\subsection{Experiments on Index Construction}
\textbf{Index Time and Space.}
Table~\ref{tab:index_cost} reports construction time (s) and index size (MB) for interval containment on all five datasets; index size excludes raw vector storage. In the table, PostF denotes PostFilter-HNSW and DBP denotes DBpedia-OpenAI.
Containment is used here because Hi-PNG is containment-specific and because different UDG relation mappings may induce slightly different transformed graph structures.
PreFilter is omitted because its range-tree metadata is negligible relative to graph indexes, while its main cost appears during query-time valid-set enumeration.
PostFilter-HNSW has the lowest index cost because it stores a single global graph, but its query performance degrades under restrictive interval filters.
UDG stores dominance labels and patch edges, yet its offline cost remains moderate: it builds faster than ACORN on all datasets and faster than Hi-PNG on the four larger workloads, while its size is comparable to ACORN and below Hi-PNG on the same four larger workloads.
On Nasdaq, Hi-PNG builds faster and uses less memory because this workload has only 350,466 base objects (Table~\ref{tab:dataset_stats}). Hi-PNG recursively partitions the interval space until each leaf contains at most a leaf-size threshold and then builds proximity graphs for tree nodes; the smaller Nasdaq workload therefore reduces the amount of tree-node graph construction.

\begin{table}[t]
\centering
\footnotesize
\caption{Index construction cost}
\setlength{\tabcolsep}{3pt}
\begin{tabular}{@{}llrrrr@{}}
\hline
Dataset & Metric & PostF & ACORN & Hi-PNG & UDG \\
\hline
\multirow{2}{*}{SIFT1M}   & time (s) & 32 & 238 & 337 & 195 \\
                          & size (MB) & 167 & 394 & 1020 & 467 \\
\multirow{2}{*}{DEEP1M}   & time (s) & 30 & 183 & 327 & 177 \\
                          & size (MB) & 175 & 389 & 1090 & 474 \\
\multirow{2}{*}{DBP}      & time (s) & 227 & 1273 & 1272 & 878 \\
                          & size (MB) & 188 & 399 & 1223 & 498 \\
\multirow{2}{*}{S\&P 500} & time (s) & 52 & 493 & 386 & 272 \\
                          & size (MB) & 153 & 569 & 869 & 467 \\
\multirow{2}{*}{Nasdaq}   & time (s) & 28 & 225 & 80 & 152 \\
                          & size (MB) & 51 & 138 & 81 & 151 \\
\hline
\end{tabular}
\label{tab:index_cost}
\end{table}

\begin{figure}[h]
    \centering
    \includegraphics[width=0.9\linewidth]{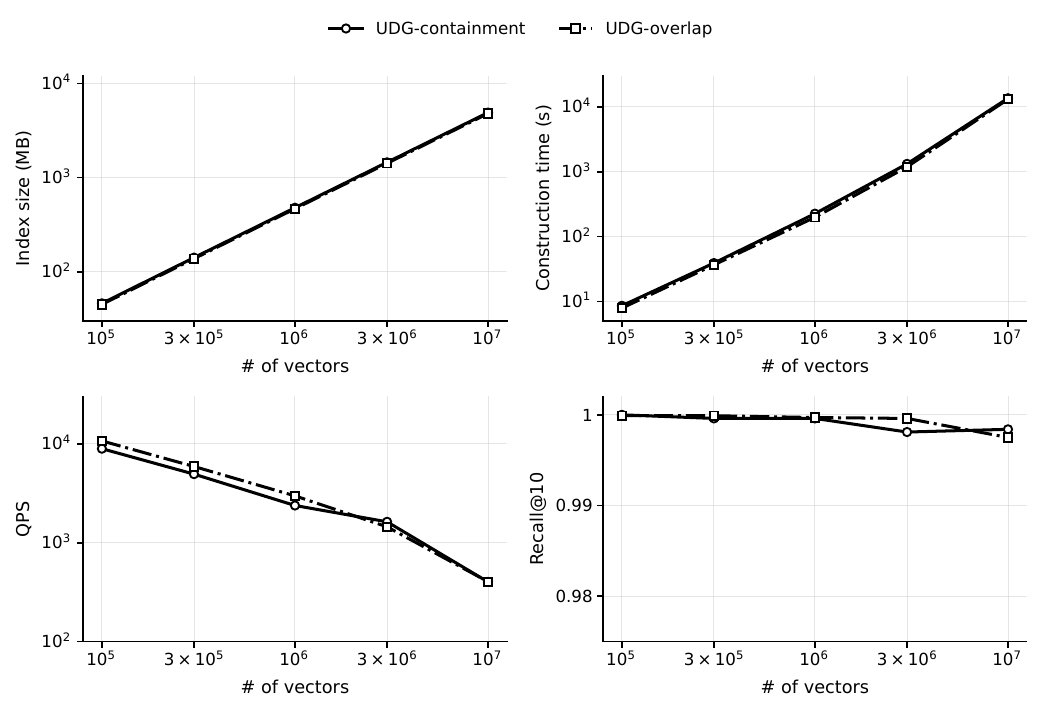}
    \caption{Scalability of UDG on DEEP prefixes with containment and overlap queries.}
\label{fig:scalability}
\end{figure}
\textbf{Scalability.}
Figure~\ref{fig:scalability} evaluates UDG on DEEP prefixes from 100K to 10M base vectors under a fixed-selectivity workload. We fix the parameters $M=32$, $Z=800$, $K_{\mathsf{p}}=8$, and efsearch$=512$, using 16 threads for both index construction and retrieval. 
The index size grows almost linearly with the number of vectors, from about 46MB at 100K vectors to about 4800MB at 10M vectors. Construction time increases substantially with scale, reflecting the offline cost of building the dominance-labeled graph.
At query time, throughput decreases with scale, from roughly 10K QPS to about 400 QPS, but UDG maintains high Recall@10 $> 0.997$ for both containment and overlap.
These results indicate that UDG preserves high search quality up to 10M vectors, with predictable storage growth and the main scalability cost concentrated in offline construction.

\subsection{Impact of Patch Edges}
\label{sec:patch-edge-ablation}
\begin{figure}[h]
    \centering
    \includegraphics[width=0.9\linewidth]{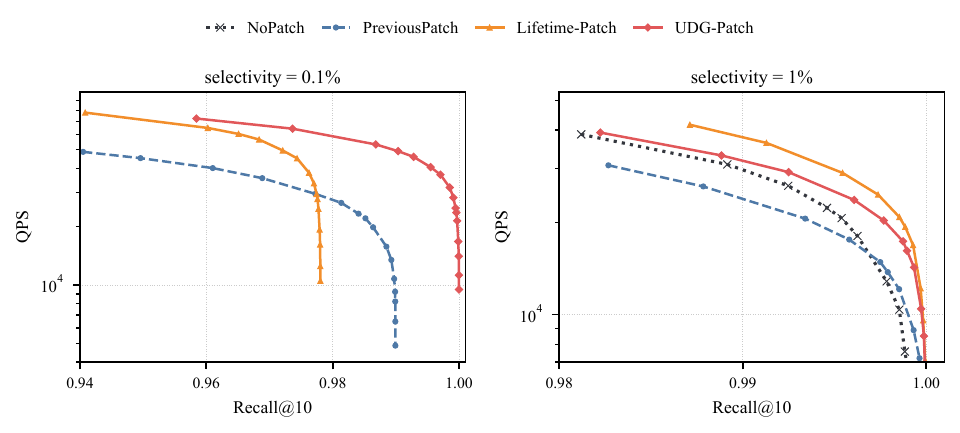}
    \caption{Patch edge ablation with containment queries under different selectivities.}
    \label{fig:patch-edge-search-performance-containment}
    
    \includegraphics[width=0.9\linewidth]{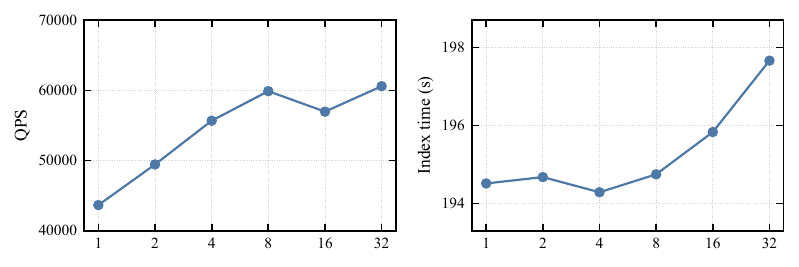}
    \caption{QPS (Recall@10$>0.99$) at 0.1\% selectivity and index time when varying $K_{\mathsf{p}}$.}
    \label{fig:patch_edge_ablation}
\end{figure}
\textbf{Patch Edge ablation.}
Figure~\ref{fig:patch-edge-search-performance-containment} evaluates the effectiveness of patch edges under restrictive interval filters, where only a small fraction of objects remain valid and the active subgraph can easily become disconnected. We compare four variants. NoPatch uses UDG construction without patch edges. PreviousPatch adds repair edges to previous inserted objects, but does not consider neighbor lifetime or distance. Lifetime-Patch selects repair candidates with longer valid lifetimes and applies distance-based diversity pruning, but does not reserve lifetime anchors. UDG-Patch is the full design, which combines lifetime anchors, lifetime-aware candidate selection, and distance-based pruning. For pool-based variants, we use the default pool factor $K_{\mathsf{p}}$=8.

The NoPatch variant reaches only 0.77 recall and is omitted from the 0.1\% selectivity graph. This confirms that patch edges are critical for preserving navigability in small valid-set states. NoPatch degrades because the optimized construction leaves some canonical ranges with too few active routing edges. PreviousPatch improves connectivity, confirming that additional repair edges are useful, but its benefit is limited because many previous neighbors expire quickly as the query state admits fewer valid objects. Lifetime-Patch further improves the query performance frontier by favoring repair edges that remain active over longer ranges, but distance-based pruning can still discard important long-lived anchors, leaving some canonical ranges weakly connected. UDG-Patch achieves the best overall trade-off, showing that lifetime anchors provide stable routing choices and improve search quality in small valid-set states.

We further study the patch pool factor $K_{\mathsf{p}}$ at 0.1\% selectivity. For each $K_{\mathsf{p}}$, we report the best QPS among configurations satisfying Recall@10 $\geq 0.99$, together with the corresponding index time in Figure~\ref{fig:patch_edge_ablation}. Increasing $K_{\mathsf{p}}$ enlarges the repair candidate pool and improves high-recall throughput when the pool is small. The gain, however, saturates for larger pools, while index time increases due to the additional candidate evaluation. We therefore set $K_{\mathsf{p}}=8$ as the default, as it provides high-recall throughput with controlled indexing overhead.

\section{Related Work}
\textbf{Approximate Nearest Neighbor Search.} ANNS is a fundamental primitive for high-dimensional vector retrieval. Due to the curse of dimensionality, exact nearest neighbor search is often too expensive at large scale, so ANNS aims to return high-recall results with much lower query latency~\cite{ANNS-remove-curse-dim-1998}. Existing ANNS indexes can be broadly categorized into hashing-based methods~\cite{gionis1999similaritySearchHash,multi-probe-lsh-2007}, quantization or partition based methods~\cite{PQ-ANNS-2011,ivfpq-multiIndex-2012,optPQ-2014,scalableNNS-cluster-2014,billion-search-gpu-2021}, and graph-based methods~\cite{HNSW-2020,nsg-2019,diskanns-2019,graph-survey-2021}. Among them, graph-based indexes have become widely used because greedy routing on a sparse proximity graph often provides a strong recall--efficiency trade-off. Representative methods include HNSW, which incrementally builds a hierarchical navigable small-world graph, NSG, which refines a candidate graph using a sparsified navigating structure, and DiskANN, which improves scalability for billion-scale vector search~\cite{HNSW-2020,nsg-2019,diskanns-2019}. 

\textbf{Attributed-Filtered ANNS.} Filtered ANNS (or hybrid vector search) retrieves nearest neighbors satisfying structured predicates. 
Existing methods combine vector indexes with pre-filtering, post-filtering, or predicate-aware index~\cite{vbase-2023,filtered-diskann,hqann-2025,acorn-2024,ung-2025}, are flexible for Boolean and categorical metadata but generally fail to guarantee graph navigability within query-dependent valid subsets.
For numeric attributes, range-filtering ANNS exploits one-dimensional scalar ordering to construct range-specific index~\cite{ARKGraph-2023,SeRF-2024,window-filter-2024-RFANNS,iRangeGraph-2024-rfanns,UNIFY-2024,dynamic_serf,DIGRA-2025-rfanns}. While demonstrating the efficacy of filter-aware indexing, these methods are structurally bound to scalar attribute. Consequently, they are not applicable for interval predicates like containment and overlap, which couple two endpoints and induce two-dimensional dominance regions.

Temporal and interval-aware ANNS has also received recent attention. TANNS optimizes point-in-time retrieval using validity intervals and query timestamps~\cite{TANNS-2025}, whereas Hi-PNG studies the containment instance where both data and queries are intervals~\cite{hi-png-2025}. Concurrent range-range ANNS work proposes a labeled multi-segment tree graph~\cite{mstg-2026} for arbitrary range-range predicates. Separately, classical interval indexes~\cite{relational_interval_tree} excel at exact low-dimensional enumeration, but they inherently lack the vector-space proximity structures strictly required for high-dimensional ANNS. In contrast, UDG targets closed two-bound conjunctive interval predicates and unifies them under a single edge-activation rule within a normalized dominance space.

\section{Conclusion}
This paper presented the Unified Dominance Graph (UDG), a cohesive graph indexing framework for Interval-Predicate ANNS. By mapping supported interval predicates into a normalized two-dimensional dominance space, UDG unifies containment, overlap, and related endpoint-bound relations under a single graph-search abstraction. 
UDG achieves structural lossless compression of query-specific proximity graphs, while our proposed patch edge mechanism adds routing choices in restrictive filter states, where the active valid subgraph can become poorly connected.  
Extensive experiments show that UDG achieves superior recall throughput trade-offs across interval relations, while requiring substantially lower indexing cost than specialized interval-filtering indexes and remaining competitive with general hybrid-search baselines.
UDG establishes a versatile, effective foundation for database systems to jointly process ANNS and interval attributed metadata.

% \section*{Acknowledgment}
\section{AI-GENERATED CONTENT ACKNOWLEDGEMENT}
The authors acknowledge the use of ChatGPT to assist with code implementation and language polishing. These tools were used for implementation support and editorial refinement, not for generating scientific claims, experimental results, or analysis. All scientific content, code, experiments, and reported results were reviewed and verified by the authors.

% TODO: use bibitem at final version
% \begin{thebibliography}{00}
% \bibitem{zipfs_uni_power_low}Ectors, W., Kochan, B., Janssens, D., Bellemans, T. \& Wets, G. Exploratory analysis of Zipf’s universal power law in Activity Schedules. {\em Transportation}. \textbf{46}, 1689-1712 (2018,2)

% \end{thebibliography}

\bibliographystyle{IEEEtran}
\bibliography{custom}
\end{document}